\documentclass[journal]{IEEEtran}

\IEEEoverridecommandlockouts
\usepackage{cite}
\usepackage{amsmath,amssymb,amsfonts}
\usepackage{graphicx}
\usepackage{acronym,comment}
\usepackage[colorinlistoftodos]{todonotes}

\usepackage{textcomp}
\usepackage{xcolor}

\usepackage{bbm}
\usepackage{amsthm}
\usepackage{psfrag,caption,subcaption}

\usepackage{bbm}
\usepackage{algorithm}
\usepackage[noend]{algpseudocode} 

\algrenewcommand\algorithmiccomment[1]{// {\itshape #1}}

\newtheorem{mylem}{Lemma}
\newtheorem{myrem}{Remark}%

\usepackage[utf8]{inputenc}
\usepackage[english]{babel}
\newtheorem{theorem}{Theorem}
%


\def\cast{{
   \mathord{
      \hbox to 0em{
         \ooalign{
	   \smash{\hbox{$\ast$}}\crcr
	   \smash{\hskip-1pt\Large\hbox{$\circ$}} }
	 \hidewidth}
      \phantom{\bigcirc}
} }}






\newcommand{\rH}{^{ \raisebox{1pt}{$\rm \scriptscriptstyle H$}}}
\newcommand{\rT}{^{ \raisebox{1.2pt}{$\rm \scriptstyle T$}}}

\newcommand{\bds}{\begin {itemize}}
\newcommand{\eds}{\end {itemize}}
\newcommand{\bdf}{\begin{definition}}
\newcommand{\blm}{\begin{lemma}}
\newcommand{\edf}{\end{definition}}
\newcommand{\elm}{\end{lemma}}
\newcommand{\bthm}{\begin{theorem}}
\newcommand{\ethm}{\end{theorem}}
\newcommand{\bprp}{\begin{prop}}
\newcommand{\eprp}{\end{prop}}
\newcommand{\bcl}{\begin{claim}}
\newcommand{\ecl}{\end{claim}}
\newcommand{\bcr}{\begin{coro}}
\newcommand{\ecr}{\end{coro}}
\newcommand{\bquest}{\begin{question}}
\newcommand{\equest}{\end{question}}


\newcommand{\larrow}{{\larrow}}



\newcommand{\argmin}{\ensuremath{\mathrm{arg}\min}}
\newcommand{\argmax}{\ensuremath{\mathrm{arg}\max}}




\newcommand{\cA}{{\ensuremath{\mathcal{A}}}}

\newcommand{\cC}{{\ensuremath{\mathcal{C}}}}

\newcommand{\cN}{{\ensuremath{\mathcal{N}}}}
\newcommand{\cO}{{\ensuremath{\mathcal{O}}}}
\newcommand{\cP}{{\ensuremath{\mathcal{P}}}}

%

\def\mbC{{\ensuremath{\mathbb C}}}

\def\mbE{{\ensuremath{\mathbb E}}}


\newcommand{\va}{{\ensuremath{{\mathbf{a}}}}}

\newcommand{\vg}{{\ensuremath{{\mathbf{g}}}}}

\newcommand{\vh}{{\ensuremath{{\mathbf{h}}}}}

\newcommand{\vp}{{\ensuremath{{\mathbf{p}}}}}
\newcommand{\vq}{{\ensuremath{{\mathbf{q}}}}}

\newcommand{\vu}{{\ensuremath{{\mathbf{u}}}}}
\newcommand{\vv}{{\ensuremath{{\mathbf{v}}}}}

\newcommand{\vx}{{\ensuremath{{\mathbf{x}}}}}


\newcommand{\mB}{{\ensuremath{\mathbf{B}}}}

\newcommand{\mH}{{\ensuremath{\mathbf{H}}}}

\newcommand{\mI}{{\ensuremath{\mathbf{I}}}}

\newcommand{\mJ}{{\ensuremath{\mathbf{J}}}}

\newcommand{\mV}{{\ensuremath{\mathbf{V}}}}

\usepackage{latexsym}

\def\IC{\mathbb C}
\def\IN{\mathbb N}
\def\IZ{\mathbb Z}
\def\IR{\mathbb R}

\def\shat{^{\mathchoice{}{}%
 {\,\,\smash{\hbox{\lower4pt\hbox{$\widehat{\null}$}}}}%
 {\,\smash{\hbox{\lower3pt\hbox{$\hat{\null}$}}}}}}


\def\bSigma{{
      \ooalign{
      \smash{\hskip.4pt\raise.4pt\hbox{$\Sigma$}}\vphantom{}\crcr
      \smash{\hskip.7pt\raise.6pt\hbox{$\Sigma$}}\vphantom{}\crcr
      \smash{\hbox{$\Sigma$}}\vphantom{$\Sigma$}}
      \vphantom{\hbox{$\Sigma$}}
      }}
\def\bTheta{{
      \ooalign{
      \smash{\hskip.5pt\raise.5pt\hbox{$\Theta$}}\vphantom{}\crcr
      \smash{\hskip.0pt\raise.1pt\hbox{$\Theta$}}\vphantom{}\crcr
      \smash{\hbox{$\Theta$}}\vphantom{$\Theta$}}
      \vphantom{\hbox{$\Theta$}}
      }}
\def\bDelta{{
      \ooalign{
      \smash{\hskip.4pt\raise.4pt\hbox{$\Delta$}}\vphantom{}\crcr
      \smash{\hskip.7pt\raise.6pt\hbox{$\Delta$}}\vphantom{}\crcr
      \smash{\hbox{$\Delta$}}\vphantom{$\Delta$}}
      \vphantom{\hbox{$\Delta$}}
      }}
\def\bLambda{{
      \ooalign{
      \smash{\hskip.5pt\raise.5pt\hbox{$\Lambda$}}\vphantom{}\crcr
      \smash{\hskip.0pt\raise.1pt\hbox{$\Lambda$}}\vphantom{}\crcr
      \smash{\hbox{$\Lambda$}}\vphantom{$\Lambda$}}
      \vphantom{\hbox{$\Lambda$}}
      }}

\makeatletter

\def\bordermatrix#1{\begingroup \m@th
  \@tempdima 8.75\p@
  \setbox\z@\vbox{%
    \def\cr{\crcr\noalign{\kern2\p@\global\let\cr\endline}}%
    \ialign{$##$\hfil\kern2\p@\kern\@tempdima&\thinspace\hfil$##$\hfil
      &&\quad\hfil$##$\hfil\crcr
      \omit\strut\hfil\crcr\noalign{\kern-\baselineskip}%
      #1\crcr\omit\strut\cr}}%
  \setbox\tw@\vbox{\unvcopy\z@\global\setbox\@ne\lastbox}%
  \setbox\tw@\hbox{\unhbox\@ne\unskip\global\setbox\@ne\lastbox}%
  \setbox\tw@\hbox{$\kern\wd\@ne\kern-\@tempdima\left[\kern-\wd\@ne
    \global\setbox\@ne\vbox{\box\@ne\kern2\p@}%
    \vcenter{\kern-\ht\@ne\unvbox\z@\kern-\baselineskip}\,\right]$}%
  \null\;\vbox{\kern\ht\@ne\box\tw@}\endgroup}
\makeatother

\makeatletter
\def\argmin{\mathop{\operator@font arg\,min}}
\def\argmax{\mathop{\operator@font arg\,max}}
\makeatother

\newcommand{\bea}{\begin{array}}
\newcommand{\ena}{\end{array}}
\newcommand{\beq}{\begin{equation}}
\newcommand{\enq}{\end{equation}}

\newcommand{\beqa}{\begin{eqnarray}}
\newcommand{\enqa}{\end{eqnarray}}

\newcommand{\beqan}{\begin{eqnarray*}}
\newcommand{\enqan}{\end{eqnarray*}}

\newcommand{\AL}{\begin{enumerate}}
\newcommand{\ALE}{\end{enumerate}}


\def\addots{\mathinner{
    \mkern1mu\raise0pt\vbox{\kern7pt\hbox{.}}
    \mkern2mu\raise4pt\hbox{.}
    \mkern2mu\raise7pt\hbox{.}
    \mkern1mu}}

\def\sddots{\mathinner{
    \mkern.8mu\raise7pt\hbox{.}
    \mkern.8mu\raise4pt\hbox{.}
    \mkern.8mu\raise0pt\vbox{\kern7pt\hbox{.}}
    \mkern1mu}}

\def\saddots{\mathinner{
    \mkern.2mu\raise0pt\vbox{\kern7pt\hbox{.}}
    \mkern.2mu\raise4pt\hbox{.}
    \mkern.2mu\raise7pt\hbox{.}
    \mkern1mu}}




\def\sqplus{\mathbin{
	{\ooalign{\hfil\raise.3ex\hbox{\scriptsize
	+}\hfil\crcr\mathhexbox274\crcr\mathhexbox275}}
	}} 
\def\sqminus{\mathbin{
	{\ooalign{\hfil\raise.3ex\hbox{\scriptsize
	--}\hfil\crcr\mathhexbox274\crcr\mathhexbox275}}
	}}

\def\IC{{
   \mathord{
      \hbox to 0em{
	 \hskip-4pt
         \ooalign{
	   \smash{\hskip1.9pt\raise2.6pt\hbox{$\scriptscriptstyle |$}}\crcr
	   \smash{\hbox{\rm\sf C}} }
	 \hidewidth}
      \phantom{\hbox{\rm\sf C}}
} }}
\def\IN{
    {\ooalign{
   \smash{\hskip2.2pt\raise1.5pt\hbox{$\scriptscriptstyle |$}}\vphantom{}\crcr
   \hbox{\sf N}
	}}
	} 
\def\IZ{
    {\ooalign{
   \smash{\hskip1.9pt\raise0pt\hbox{$\sf Z$}}\vphantom{}\crcr
   \hbox{\sf Z}
	}}
	} 
\def\IR{
    {\ooalign{
   \smash{\hskip2.2pt\raise1.5pt\hbox{$\scriptscriptstyle |$}}\vphantom{}\crcr
   \smash{\hskip2.2pt\raise3.3pt\hbox{$\scriptscriptstyle |$}}\vphantom{}\crcr
   \hbox{\sf R}
	}}
	} 

\DeclareMathAlphabet{\mathcmb}{OT1}{cmr}{b}{n}

\def\bSigma{\ensuremath{\mathcmb{\Sigma}}}
\def\bLambda{\ensuremath{\mathcmb{\Lambda}}}

\def\bTheta{\ensuremath{\mathcmb{\Theta}}}


\newcommand{\SI}{\begin{indlist}}
\newcommand{\EI}{\end{indlist}}

%

%
\newcommand{\DL}{\begin{dashlist}}
\newcommand{\DLE}{\end{dashlist}}



\makeatletter
\def\setboxz@h{\setbox\z@\hbox}
\def\wdz@{\wd\z@}
\def\boxz@{\box\z@}
\def\underset#1#2{\binrel@{#2}%
  \binrel@@{\mathop{\kern\z@#2}\limits_{#1}}}
\def\binrel@#1{\begingroup
  \setboxz@h{\thinmuskip0mu
    \medmuskip\m@ne mu\thickmuskip\@ne mu
    \setbox\tw@\hbox{$#1\m@th$}\kern-\wd\tw@
    ${}#1{}\m@th$}%
  \edef\@tempa{\endgroup\let\noexpand\binrel@@
    \ifdim\wdz@<\z@ \mathbin
    \else\ifdim\wdz@>\z@ \mathrel
    \else \relax\fi\fi}%
  \@tempa
}
\let\binrel@@\relax%
\makeatother


\acrodef{ris}[RIS]{reconfigurable intelligent surface}
\acrodef{isac}[ISAC]{integrated sensing and communication}
\acrodef{dfbs}[DFBS]{Dual function radar communication base station}
\acrodef{ue}[UE]{user equipment}
\acrodef{sinr}[SINR]{signal-to-interference-plus-noise ratio}
\acrodef{snr}[SNR]{signal-to-noise ratio}
\acrodef{mui}[MUI]{multi-user-interference}
\acrodef{mimo}[MIMO]{multiple-input-multiple-output}
\acrodef{miso}[MISO]{multiple-input-single-output}
\acrodef{ura}[URA]{uniform rectangular array}
\acrodef{nlos}[NLoS]{non-line-of-sight}
\acrodef{los}[LoS]{line-of-sight}
\acrodef{wrt}[w.r.t.]{with respect to}
\acrodef{rcc}[RCC]{radar-communication-coexistence}
\acrodef{ula}[ULA]{uniform linear array}
\acrodef{bs}[BS]{base station}
\acrodef{cdf}[c.d.f.]{cumulative distribution function}
\acrodef{sdp}[SDP]{semi-definite program}
\acrodef{sdr}[SDR]{semi-definite relaxation}
\acrodef{qos}[QoS]{quality of service}
\acrodef{iid}[IID]{independent and identically distributed}
\acrodef{dris}[DP-HRIS]{dynamically programmable hybrid reconfigurable intelligent surface}
\acrodef{hris}[HRIS]{hybrid reconfigurable intelligent surface}

\acrodef{pris}[PRIS]{passive RIS}
\acrodef{aris}[ARIS]{active RIS}


\acrodef{tdm}[TDM]{time division multiplexing}

\def\BibTeX{{\rm B\kern-.05em{\sc i\kern-.025em b}\kern-.08em
		T\kern-.1667em\lower.7ex\hbox{E}\kern-.125emX}}
\begin{document}

	\title{		
Optimal Placement of Active and Passive Elements in Hybrid RIS-assisted Communication Systems
	}
	
	\author{\IEEEauthorblockN{R.S. Prasobh Sankar,~\IEEEmembership{Student Member, IEEE}, and Sundeep Prabhakar Chepuri,~\IEEEmembership{Member, IEEE} \thanks{ The authors are with the Department of Electrical Communication Engineering, Indian Institute of Science, Bengaluru, India. Email:\{prasobhr,spchepuri\}@iisc.ac.in. This work is supported in part by the Next Generation Wireless Research and Standardization on 5G and Beyond project, MeitY and PMRF, Government of India.} }

	}
	\maketitle
	
	\begin{abstract}
	
		Hybrid reconfigurable intelligent surfaces (HRIS) are RIS architectures having both active and passive elements. The received signal-to-noise ratio (SNR) in HRIS-assisted communication systems depends on the placement of active elements. In this paper, we show that received SNR can be improved with a channel-aware  placement of the active elements. We jointly design the transmit precoder, the RIS coefficients, and the location of active and passive elements of the HRIS to maximize the SNR. We solve the underlying combinatorial nonconvex optimization problem using alternating optimization and propose a low-complexity solver, which is provably nearly optimal. Through numerical simulations, we demonstrate that the proposed method offers significantly improved performance compared to communication systems having a fully passive RIS array or a hybrid RIS array with channel agnostic active element placement and performance comparable to that of communication systems with a fully active RIS array.

	\end{abstract} 
	
	\begin{IEEEkeywords}
		Beamforming, hybrid ris, optimal placement, reconfigurable intelligent surface, transmit precoding.
	\end{IEEEkeywords}
	
\section{Introduction}
\Acp{ris} are arrays with tunable elements that can be remotely controlled to favorably modify the wireless propagation environment and introduce additional paths, which are crucial for next-generation wireless systems operating in millimeter and beyond frequency bands with harsh propagation environments~\cite{basar2019wireless}.

\acp{ris}, as originally envisioned, are fully passive arrays comprising passive phase shifters. One of the main limitations of  \acp{ris} is the so-called \textit{double pathloss} effect, i.e., the pathloss of the link via the \ac{ris} is the product of the pathlosses of the link between the \ac{ris} and transmitter, e.g., a \ac{bs}, and between the \ac{ris} and receiver, e.g., a~\ac{ue}. Due to the double pathloss, the performance gains offered by a fully \ac{pris} are often insignificant, especially when a direct path already exists between the communicating terminals. To circumvent the limitations of \acp{pris}, a different \ac{ris} architecture known as the \ac{aris} has been proposed~\cite{zhang2022active}. Unlike \acp{pris}, \acp{aris} have active elements that can phase shift as well as amplify the incident signal. Active elements comprise passive phase shifters and reflection-type amplifiers, which consume relatively less power than decode-and-forward relay systems. However, amplifiers consume additional power and generate noise~\cite{zhang2022active}. \Acp{hris} are \ac{ris} architectures having passive phase shifters along with a few active elements that offer a power consumption and noise trade-off~\cite {sankar2022hybrid_RIS_JRC,nguyen2022hybrid_ris_comm,kang2022hybrid_active_passive,xie2022hybrid_mec}. By appropriately designing a \ac{hris}, performance comparable to that of a \ac{aris} can be achieved while consuming lesser power. 

Typical design in \ac{hris}-assisted communication systems involves a joint design of transmit precoder and the \ac{hris} parameters to maximize a utility function such as the \ac{snr} and is often tackled using alternating optimization.  In~\cite{nguyen2022hybrid_ris_comm}, the transmit precoder and  \ac{hris} parameters are designed to maximize the sum-rate of a multi-user \ac{miso} system. In~\cite{kang2022hybrid_active_passive}, \ac{hris} parameters and the number of active and passive elements are designed to maximize \ac{snr}. In~\cite{xie2022hybrid_mec}, a more general \ac{hris} architecture wherein each element can be dynamically switched between active and passive modes is considered. While designing \acp{hris}, an interesting question is, does the placement of active and passive elements affect performance of \ac{hris}-assisted communication systems? 

In this paper, we show that by optimally placing the active elements in \acp{hris}, we can get an \ac{snr} improvement whenever the channel gains associated with each \ac{hris} element are different. To this end, we focus on channel-aware placement of active elements, which has not been studied in~\cite{nguyen2022hybrid_ris_comm,kang2022hybrid_active_passive,xie2022hybrid_mec}. We consider a dynamic \ac{hris} architecture as in~\cite{xie2022hybrid_mec} and jointly design the transmit precoder, \ac{hris} parameters, and placement of active~(and passive) elements in a channel-aware manner to maximize the received \ac{snr}. While the problem of \ac{hris} coefficient design itself is non-convex, the problem of  optimal placement is combinatorial and NP-hard. To tackle the non-convex optimization problem, we propose a low-complexity solver based on alternating optimization.  Specifically, we maximize a lower bound on the \ac{snr} and present closed-form expressions for the precoder, \ac{hris} coefficients, and a simple near-optimal heuristic for the placement of active elements. The proposed algorithm scales log-linearly with the number of \ac{hris} elements, making it also suitable for systems that require large \ac{ris} arrays. We also derive the suboptimality gap between the proposed and optimal solutions. Through numerical simulations, we demonstrate that the proposed method offers higher \ac{snr} when compared to communication systems with \ac{pris}, \ac{hris} with arbitrary active element placements, and without \ac{ris}.

\section{Problem Modeling} \label{sec:sys:model}
In this section, we present the system model and state the problem. 

\subsection{System Model}

\begin{figure}
\centering
	\includegraphics[width=0.8\columnwidth]{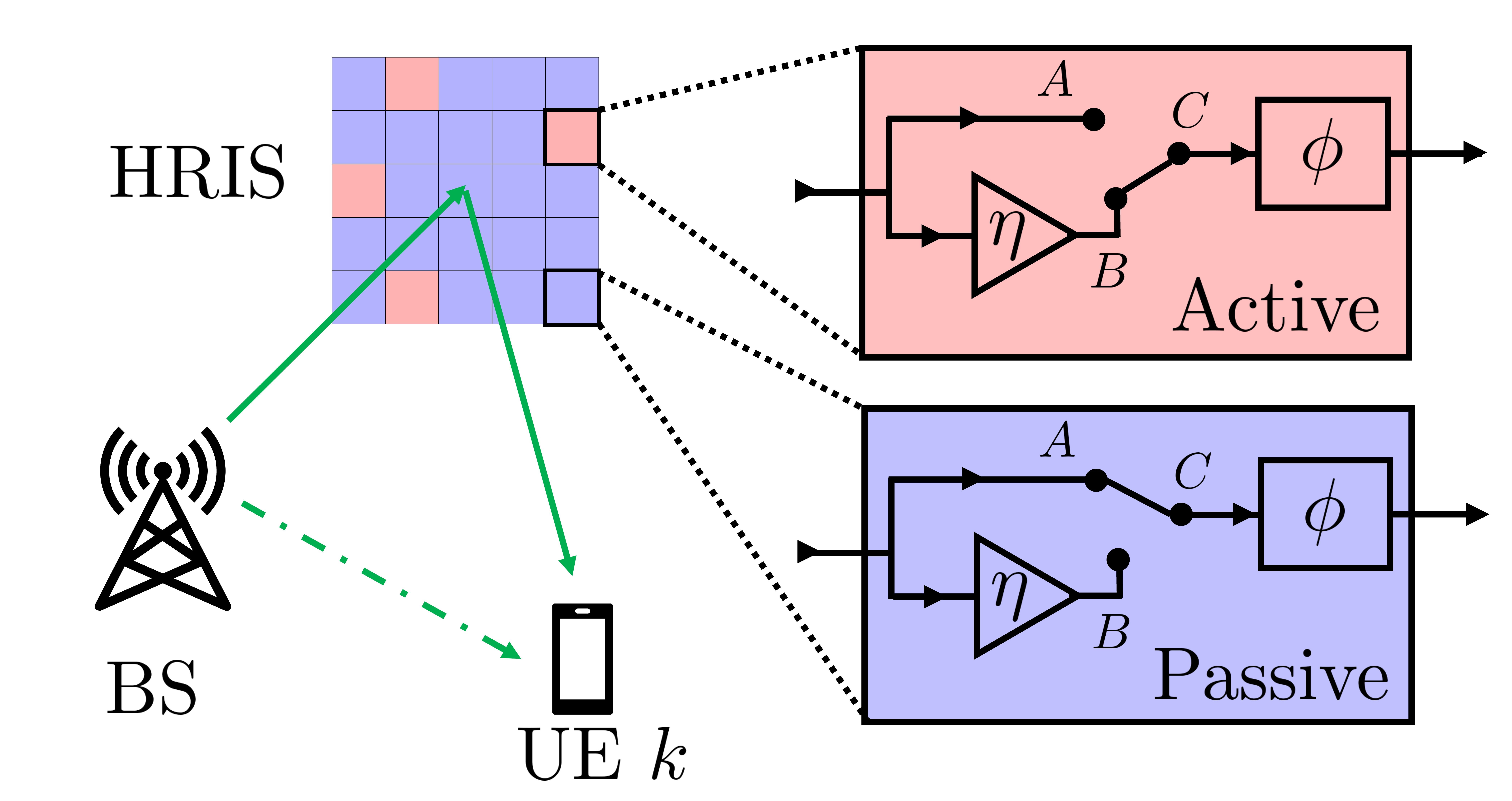}
	\caption{\footnotesize The \ac{hris} architecture. Each element is equipped with an amplifier and can be programmed to operate in passive (switch set to position AC) or active~(switch set to position BC) configurations.}
	\label{fig:sys:model}
\vspace{-4mm}
\end{figure}
Consider a \ac{hris}-assisted \ac{miso} communication system with a multi-antenna \ac{bs} serving $K$ single-antenna \acp{ue} as illustrated in~Figure~\ref{fig:sys:model}.  
 The data stream corresponding to the $k$th \ac{ue} denoted by $s_k$ is precoded by the unit norm precoder $\vp_k \in \mbC^{M}$ before transmission. 
 We assume that the \acp{ue} are served using \ac{tdm} multiple access, and henceforth, we drop the user index $k$. The discrete-time complex baseband downlink signal is given by $\vx = \sqrt{P_t}\vp s$, where $P_t$ is the total transmit power.

 Consider a \ac{hris} with $N$  elements, out of which $L$ elements are active. Let $\cA$ and $\cA^c$ denote the index set indicating the placement of active and passive elements, respectively.  Let $\omega_i \in \mbC$ be the coefficient of $i$th element in the \ac{hris}. We have $\vert \omega_i \vert \leq \eta$ for $i\in \cA$ and $\vert \omega_i \vert = 1$ for  $i \in \cA^c$, where $\eta$ is the maximum amplification of active elements.  Reflection-type amplifiers consume additional power and lead to noisy reflected signals. The  total power consumed by the  \ac{hris} is given by~\cite{nguyen2022hybrid_ris_comm}
 \begin{eqnarray}
 	P_{\rm ris} = \sum_{i \in \cA} \vert \omega_i \vert^2  \left( \mbE \left[ \vu_{\rm in} \vu_{\rm in}\rH \right] + \nu^2  \right),
 \end{eqnarray}
where $\vu_{\rm in}$ is the signal incident on the \ac{hris}.  The \ac{hris} coefficients should be designed to satisfy the total power constraint $P_{\rm ris} \leq P_{\rm ris, max}$.  Let $\mH_{\rm br} \in \mbC^{N \times M}$ be the \ac{mimo} channel between the \ac{bs} and  \ac{hris}. Hence, we have $\vu_{\rm in} = \mH_{\rm br}\vx$. Let $\mJ_{\cA} \in \{0,1\}^{N \times N}$ denote the selection matrix which selects the rows indexed by $\cA$.  Then, the noisy signal reflected from the \ac{hris} is
\begin{equation}
	\vu_{\rm out} = {\rm diag}\left( \boldsymbol{\omega} \right) \vu_{\rm in} + \mJ_{\cA} {\rm diag}\left( \boldsymbol{\omega} \right) \vq,
\end{equation} 
where $\vq \sim \cC\cN(\boldsymbol{0},\nu^2\mI)$ is the \emph{\ac{ris} noise} vector and $\boldsymbol{\omega} = [\omega_1,\ldots,\omega_N]\rT \in \mbC^{N}$ is the \ac{hris} parameter vector. The sets $\cA$ and $\cA^c$ are not fixed during fabrication and can be changed dynamically.

Let $\vh_{{\rm bu}}\rH \in \mbC^{1\times M}$ and $\vh_{{\rm ru}}\rH \in \mbC^{1 \times N}$ be the channels between the \ac{bs}  and \ac{hris}
to the \ac{ue}, respectively.  Let $\vh \rH = \vh_{{\rm bu}}\rH + \vh_{{\rm ru}}\rH{\rm diag}(\boldsymbol{\omega})\mH_{\rm br}$ denote the overall channel from the \ac{bs} to the \ac{ue}. The received signal is given by
\begin{equation}
	y = \vh\rH \vx + \vh_{{\rm ru}}\rH \mJ_{\cA} {\rm diag}(\boldsymbol{\omega}) \vq  +  n,
\end{equation}
where $\vh_{{\rm bu}}\rH \mJ_{\rm a} {\rm diag}(\boldsymbol{\omega}) \vq$ is due to RIS noise from active elements and  $n \sim \cC \cN(0,\sigma^2)$ is the receiver noise. The received \ac{snr} is given by
\begin{equation} \label{eq:def:gamma}
	\gamma(\vp,\boldsymbol{\omega},\cA) = \frac{ P_t \vert \vh \rH \vp \vert^2  }{ r(\boldsymbol{\omega},\cA)  + \sigma^2},
\end{equation}
where $r(\boldsymbol{\omega},\cA) = \nu^2 \Vert \vh_{{\rm ru}}\rH\mJ_{\cA}{\rm diag}(\boldsymbol{\omega})\Vert_2^2$ is the variance of noise due to active elements at the \ac{ue}. The corresponding spectral efficiency~(SE) is $\log_2(1+\gamma)$. The \ac{snr} is  a function of the precoder $\vp$ and the \ac{hris} coefficients $\boldsymbol{\omega}$, and also is a function of $\cA$ as we can observe from~\eqref{eq:def:gamma}. Hence, it is important to optimally design $\cA$ to improve \ac{snr}.
\subsection{Problem formulation} \label{sec:prob:form}
Data rate or spectral efficiency of a \ac{miso} communication system is determined by the received \ac{snr} $\gamma$. Hence, we propose to design the transmit precoder, $\vp$, the \ac{hris} coefficients, $\boldsymbol{\omega}$, and the placement of active elements, $\cA$, to maximize the received \ac{snr} as 
\begin{subequations} 
	\begin{align} 
	\hspace{-2mm}	(\cP) \quad \hspace{3mm} \medspace	\underset{\vp,\boldsymbol{\omega}, \cA}{\text{maximize}}  & \quad \gamma(\vp,\boldsymbol{\omega},\cA) \nonumber  \\ 
		\text{s. to }	&  \quad  \| \vp \|_2 \leq 1, \label{eq:prob:form:con:txpow} \\
		 & \quad P_{\rm ris} \leq P_{\rm ris, max}, \label{eq:prob:form:con:rispow} \\
			& \quad \vert \omega_i \vert \leq \eta,  i \in \cA,  \,  \vert \omega_i \vert = 1, i \in \cA^c, \label{eq:prob:form:con:RIS}			
	\end{align} 
\end{subequations}
where \eqref{eq:prob:form:con:txpow} is the transmit power constraint at the \ac{bs},~\eqref{eq:prob:form:con:rispow} is the \ac{ris} power constraint due to the active elements, and~\eqref{eq:prob:form:con:RIS} represents the constraints related to the active and passive elements. Due to the coupling between the optimization variables and the nonconvexity of the objective function and the constraints, $(\cP)$ is a nonconvex optimization problem. Before we develop a solver for  $(\cP)$, we make the following remark about \eqref{eq:prob:form:con:rispow}.
 \begin{myrem}
	Since $\vert \omega_i \vert \leq \eta$ for $i \in \cA$, the maximum power consumed by the \ac{hris} is upper bounded, i.e., $P_{\rm ris} \leq P_{\rm ris, ub}$~\cite[Theorem 1]{sankar2022hybrid_RIS_JRC}. Hence, if $P_{\rm ris,max} \geq P_{\rm ris, ub}$, then the \ac{hris} total power constraint is inactive. Hence, for the sake of simplicity, we ignore the \ac{hris} power constraints  in the next session. 
\end{myrem}

\section{Proposed solvers} \label{sec:altopt:baseline}
In this section, we propose an alternating optimization based solver to obtain the transmit precoder and the \ac{ris} coefficients. For a given choice of \ac{ris} coefficients $\boldsymbol{\omega}$ and $\cA$, the design of the transmit precoder to maximize $\gamma$ reduces to that of maximizing the received signal power at the \ac{ue}, for which the optimally solution is given by the maximal ratio transmission with  $\vp^\star = \vh/\Vert \vh \Vert$~\cite{tse2005fundamentals}. Hence, our main focus will be on the design of $\boldsymbol{\omega}$ and $\cA$ for a given precoder $\vp^\star$. First, we present an exhaustive search-based scheme to find the optimal placement $\cA$ and to design $\boldsymbol{\omega}$.  Next, we relax this problem and provide closed-form expressions to obtain a nearly optimal solution to the joint design of $\boldsymbol{\omega}$ and $\cA$.

\subsection{Exhaustive search-based design of $\boldsymbol{\omega}$ and $\cA$, given $\vp$} \label{sec:ex:search}

To find the optimal $\cA$ and $\boldsymbol{\omega}$ for a given $\vp^\star$, we could search through all the possible index sets $\cA$. For an $N$ element \ac{hris} with $L$ active elements, there are $\psi = {N \choose L}$ different possible index sets $\cA$.  Let $\cA_j$ for $j=1,\ldots, \psi$ be the candidate placement sets. Specifically, we design $\boldsymbol{\omega}(\cA_j)$ for $j=1,\ldots, \psi$ and pick the one resulting in the highest \ac{snr} as a solution. Given $\vp$ and $\cA_j$, the design of $\boldsymbol{\omega}$ can be performed using a \ac{sdp} as described next. Let $f = \vh_{{\rm bu}}\rH \vp$ and  $\vg = {\rm diag}\left(  \vh_{{\rm ru}} \right) \mH_{\rm br} \vp  =[g_1,\ldots,g_N] \in \mbC^{N}$ denote the effective channels corresponding to the direct path and the path via the \ac{hris}, respectively. By introducing $\vv\rH = [\boldsymbol{\omega}\rT(\cA_j),1] \in \mbC^{1\times (N+1)}$ and $\va = [\vg\rT,f]\rT \in \mbC^{N+1}$,  the subproblem w.r.t. $\boldsymbol{\omega}$ can be written as 

\begin{subequations} \label{eq:sdp:ris} 
	\begin{align} 
		\underset{\mV = \vv \vv\rH}{\text{maximize}}  & \quad \frac{ {\rm Tr}\left( \va\va\rH \mV \right) }{{\rm Tr}\left( \mB \mV \right) + \sigma^2} \nonumber \\ 
	\text{s. to}	&	\quad \vert \left[\mV \right]_{i,i} \vert \leq \eta^2,  i \in \cA_j,  \,  \vert \left[\mV \right]_{i,i}\vert = 1, i \in \cA_j^c,  \nonumber
	\end{align} 
\end{subequations}
where \[\mB = \begin{bmatrix} {\rm diag}\left( \vh_{{\rm ru}}^* \odot \vh_{{\rm ru}} \right)\mJ_{\cA} & \boldsymbol{0} \\ \boldsymbol{0}\rT & 0 \end{bmatrix}\] and  $\odot$ is the elementwise Hadamard product. Replacing the nonconvex unit-rank constraint $\mV = \vv \vv\rH$ with a relaxed semi-definite constraint $\mV \succcurlyeq \boldsymbol{0}$, we get an \ac{sdp} that can be solved using off-the-shelf solvers. To obtain $\boldsymbol{\omega}(\cA_j)$, we use Gaussian randomization~\cite{sankar2022hybrid_RIS_JRC}. Finally, $\boldsymbol{\omega}$ is computed by solving 
$\underset{j}{\text{max}} \quad \gamma(\vp^\star, \boldsymbol{\omega}(\cA_j),\cA_j).$

Obtaining a $\zeta$-accurate solution with \ac{sdp} costs about $\cO\left( N^{6.5} \log (1/\zeta) \right)$ flops. Thus, the overall complexity of designing $\boldsymbol{\omega}$ and $\cA$ using the above exhaustive search-based scheme is about $\cO\left(  \psi N^{6.5} \log (1/\zeta) \right)$, which is prohibitively high even for moderate values of $N$. Therefore, in the next section, we provide a low-complexity method, which is provably nearly optimal.

\subsection{Low-complexity design of $\boldsymbol{\omega}$ and $\cA$, given $\vp$} \label{sec:prop:solver}

Given the precoder $\vp^\star$, the subproblem of designing the \ac{hris} coefficients and the placement of active elements is given by
	\begin{subequations} 
	\begin{align} 
			\underset{\boldsymbol{\omega}, \cA}{\text{maximize}} \quad & \quad \frac{ P_{\rm t}\vert \vh_{{\rm ru}}\rH {\rm diag}\left( \boldsymbol{\omega} \right)\mH_{\rm br}\vp^\star + \vh_{{\rm bu}} \rH \vp^\star \vert^2  }{ r(\boldsymbol{\omega},\cA)  + \sigma^2} \nonumber  \\ 
		\text{s. to}	&	\quad \vert \omega_i \vert \leq \eta,  i \in \cA,  \,  \vert \omega_i \vert = 1, i \in \cA^c. \nonumber 
	\end{align} 
\end{subequations}

 To solve the above nonconvex joint optimization problem, we begin by first replacing $r(\boldsymbol{\omega},\cA)$ in the denominator with an upper bound presented in the following lemma.
\begin{mylem}
	Due to the limited amplification of active elements, the RIS noise variance at the \ac{ue} is upper bounded as 
	\begin{equation}
		r(\boldsymbol{\omega},\cA)\leq \nu^2 \eta^2  \sum_{i=1}^{L} \left\vert \left[ \vh_{{\rm ru}} \right]_{j_i} \right\vert^2 \overset{\Delta}{=} r_{\rm max},
	\end{equation}
where $j_i$ is the index corresponding to the rank ordering  $\vert \left[ \vh_{{\rm ru}} \right]_{j_1} \vert \geq \cdots \geq \vert \left[ \vh_{{\rm ru}} \right]_{j_N} \vert$.
\end{mylem}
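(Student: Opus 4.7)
The plan is to expand $r(\boldsymbol{\omega},\cA)$ explicitly, use the amplification constraint $|\omega_i|\le \eta$ to pull out $\eta^2$, and then use the basic fact that the sum of any $L$ entries of a nonnegative vector is upper bounded by the sum of the $L$ largest entries.

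First, I would unpack the squared norm. Recalling that $\mJ_\cA$ zeros out rows (entries) not indexed by $\cA$ and that ${\rm diag}(\boldsymbol{\omega})$ is diagonal, the row vector $\vh_{\rm ru}^H\mJ_\cA{\rm diag}(\boldsymbol{\omega})$ has entries $[\vh_{\rm ru}]_i^*\,\omega_i$ for $i\in\cA$ and $0$ for $i\in\cA^c$. Taking the squared Euclidean norm gives the clean expression
\begin{equation}
r(\boldsymbol{\omega},\cA)=\nu^2\sum_{i\in\cA}\bigl|[\vh_{\rm ru}]_i\bigr|^2\,|\omega_i|^2. \nonumber
\end{equation}

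Next, I would invoke the per-element constraint $|\omega_i|\le \eta$ for every $i\in\cA$ (passive elements contribute nothing to $r$, so they are irrelevant here), which yields the intermediate bound
\begin{equation}
r(\boldsymbol{\omega},\cA)\;\le\;\nu^2\eta^2\sum_{i\in\cA}\bigl|[\vh_{\rm ru}]_i\bigr|^2. \nonumber
\end{equation}

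Finally, I would argue that since $|\cA|=L$, the remaining sum is maximized over all size-$L$ subsets of $\{1,\dots,N\}$ precisely when $\cA$ coincides with the indices of the $L$ largest magnitudes of $\vh_{\rm ru}$. Using the rank ordering $\vert[\vh_{\rm ru}]_{j_1}\vert\ge\cdots\ge\vert[\vh_{\rm ru}]_{j_N}\vert$ introduced in the statement, this gives $\sum_{i\in\cA}|[\vh_{\rm ru}]_i|^2 \le \sum_{i=1}^{L}|[\vh_{\rm ru}]_{j_i}|^2$, and the desired bound $r(\boldsymbol{\omega},\cA)\le r_{\rm max}$ follows by composition of the two inequalities. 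There is no serious obstacle: the only subtle point is being careful that the second inequality is independent of the particular choice of $\cA$, which is exactly what is needed downstream, since the bound $r_{\rm max}$ must be a constant (independent of $\boldsymbol{\omega}$ and $\cA$) to serve as a tractable surrogate in the denominator of the SNR.
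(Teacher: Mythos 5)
Your proof is correct and follows essentially the same route as the paper's: expand $r(\boldsymbol{\omega},\cA)=\nu^2\sum_{i\in\cA}\vert[\vh_{\rm ru}]_i\vert^2\vert\omega_i\vert^2$, apply $\vert\omega_i\vert\le\eta$, and bound the sum over the size-$L$ set $\cA$ by the sum of the $L$ largest squared magnitudes via the rank ordering. Your write-up is simply a more explicit version of the paper's two-line argument, including the useful observation that the resulting bound is independent of $\cA$.
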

\begin{proof}
	The \ac{ris} noise variance at the \ac{ue} is given by 
	\begin{equation}
		r(\boldsymbol{\omega},\cA) = \nu^2 \sum_{i \in \cA}  \left\vert  \left[ \vh_{{\rm ru}} \right]_i^* \omega_i \right\vert^2   \overset{(a)}{\leq} \nu^2 \eta^2 \sum_{i =1}^{L}  \left \vert \left[ \vh_{{\rm ru}} \right]_{j_i} \right \vert^2, \nonumber
	\end{equation}	
where $(a)$ follows due to the constraint that $\vert \omega_i \vert \leq \eta$ for $i\in \cA$. Since the number of active elements is $L$, the upper bound in $(a)$ follows from the definition of the index $j_i$. 
\end{proof}
Using the above upper bound on $r(\boldsymbol{\omega},\cA)$, we can lower bound the \ac{snr} as
\begin{equation} \label{eq:gamma:min:def}
	\gamma \geq  \frac{ P_{\rm t} \vert \vh_{{\rm ru}}\rH {\rm diag}\left( \boldsymbol{\omega} \right)\mH_{\rm br}\vp^\star + \vh_{{\rm bu}} \rH \vp^\star \vert^2  }{ r_{\rm max}  + \sigma^2} \overset{\Delta}{=} \gamma_{\rm min}.
\end{equation}
Now, we design the \ac{ris} coefficients and the placement of active elements to maximize the \ac{snr} lower bound $\gamma_{\rm min}$. As in the transmit precoder design,  maximizing $\gamma_{\rm min}$ is equivalent to maximizing the received signal power at the \ac{ue}. However, the problem of choosing the active element placement is still nonconvex and combinatorial. Interestingly, for the considered scenario, we have a simple solution, which we present as the following theorem. 

\begin{theorem} \label{theo:main}
The optimal active element placement that maximizes $\gamma_{\rm min}$ is given by
	\begin{equation} \label{eq:theo:opt:placement}
		\cA^\star = \{a_1,a_2,\ldots,a_L\},
	\end{equation} 
where the indices $a_i$ are defined according to the rank ordering $\vert g_{a_1} \vert \geq  \ldots \geq \vert g_{a_N} \vert$. Furthermore, the optimal RIS coefficients is given by
\begin{equation} \label{eq:theo:opt:coeff}
	\omega_i^\star = \begin{cases}
	\eta \exp\{-\jmath \left( \theta\left( g_i \right) - \theta\left( f \right) \right) \},\quad &i \in \cA, \\ \exp\{-\jmath \left( \theta\left( g_i \right) - \theta\left( f \right) \right) \},\quad &i \in \cA^c, 
	\end{cases}
\end{equation}
where $\theta\left(\cdot\right)$ denotes the phase of the complex number.
\end{theorem}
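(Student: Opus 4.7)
The plan is to exploit the fact that the denominator of $\gamma_{\rm min}$ in \eqref{eq:gamma:min:def} is a positive constant depending neither on the coefficient vector $\boldsymbol{\omega}$ nor on the placement $\cA$. Indeed, by construction $r_{\rm max}$ in the preceding lemma is fixed once $L$, $\vh_{{\rm ru}}$, $\eta$, and $\nu$ are fixed, so maximizing $\gamma_{\rm min}$ collapses to maximizing the numerator, which I would rewrite as $\bigl| f + \sum_{i=1}^N g_i\,\omega_i \bigr|^2$ using the effective-channel entries $g_i$ and the direct-path scalar $f$ introduced above.

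Next, I would invoke the triangle inequality
\begin{equation*}
\left| f + \sum_{i=1}^N g_i \omega_i \right| \;\leq\; |f| + \sum_{i=1}^N |g_i|\,|\omega_i|,
\end{equation*}
which holds with equality if and only if every summand $g_i\omega_i$ is co-phased with $f$. This pins the phase of each $\omega_i$ down to $\theta(f)-\theta(g_i)$ independent of whether element $i$ is active or passive, yielding exactly the phase part of \eqref{eq:theo:opt:coeff}. Substituting the co-phased choice back, the numerator maximization reduces to the real, separable problem of picking the magnitudes $|\omega_i|$ to maximize $\sum_{i=1}^N |g_i|\,|\omega_i|$ subject to $|\omega_i|\leq\eta$ for $i\in\cA$ and $|\omega_i|=1$ for $i\in\cA^c$.

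Since $|g_i|\geq 0$ and $\eta\geq 1$ in any amplifier model, the optimum over magnitudes saturates the bound at every active index, i.e., $|\omega_i|=\eta$ for $i\in\cA$, which completes \eqref{eq:theo:opt:coeff}. With this choice the placement-only objective becomes $\sum_{i\in\cA^c}|g_i| + \eta\sum_{i\in\cA}|g_i| = \sum_{i=1}^N|g_i| + (\eta-1)\sum_{i\in\cA}|g_i|$, whose first term is constant in $\cA$ and whose second is a nonnegative multiple of $\sum_{i\in\cA}|g_i|$. Maximizing $\sum_{i\in\cA}|g_i|$ over subsets of cardinality $L$ is a textbook top-$L$ selection whose unique optimum is the rank-ordered set $\{a_1,\ldots,a_L\}$ announced in \eqref{eq:theo:opt:placement}, which establishes the theorem.

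The main conceptual move is the very first one: replacing the coupled term $r(\boldsymbol{\omega},\cA)$ by the placement-agnostic upper bound $r_{\rm max}$ is what decouples an apparently coupled, nonconvex, combinatorial problem into three independent pieces --- phase alignment, magnitude saturation, and cardinality-constrained top-$L$ selection --- each solvable in closed form. Everything afterwards is essentially routine; the only subtlety worth flagging is the silent use of $\eta\geq 1$, since in the physically uninteresting regime $\eta<1$ the direction of the greedy top-$L$ selection would reverse.
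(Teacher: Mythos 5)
Your proof is correct and follows essentially the same route as the paper's: fix the denominator via $r_{\rm max}$, bound the numerator by the triangle inequality, achieve it by co-phasing and magnitude saturation, and reduce the placement to a top-$L$ selection on $\vert g_i\vert$. Your explicit remark that the greedy selection relies on $\eta\geq 1$ is a subtlety the paper's inequality \eqref{eq:elem:place:ub} leaves implicit, and is a worthwhile observation.
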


\begin{proof}
Let us call the numerator of \ac{snr} as $\vert c(\boldsymbol{\omega},\cA)\vert^2$ with $	c(\cA, \boldsymbol{\omega}) = f + \boldsymbol{\omega}\rT\vg =  \vert f \vert e^{\jmath\theta \left( f \right)} + \sum_{i=1}^{N}\omega_i \vert g_i \vert e^{\jmath \theta \left(g_i\right)}$.
We further have
 \begin{align}
 	\vert c(\cA,\boldsymbol{\omega}) \vert & \leq \vert f \vert + \sum_{i \in \cA} \vert \omega_i g_i \vert + \sum_{i \in \cA^c} \vert \omega_i g_i \vert \nonumber\\
 	& \overset{(a)}{\leq} \vert f \vert +  \sum_{i \in \cA} \eta \vert g_i \vert + \sum_{i \in \cA^c} \vert g_i \vert ,  \label{eq:num:max}
 \end{align}
where $(a)$ follows from the amplification constraint~\eqref{eq:prob:form:con:RIS}. Using~\eqref{eq:theo:opt:coeff}, we have  
\begin{align}
		\vert c(\cA,\boldsymbol{\omega}^\star) \vert &= \left\vert \vert f \vert e^{\jmath \theta \left(f \right)} + \sum_{i \in \cA} \eta e^{-\jmath \left( \theta \left( g_i \right) - \theta \left( f \right) \right)} \vert g_i \vert e^{\jmath \theta \left( g_i \right)} \right.  \nonumber \\
		&+ \left. \sum_{i \in \cA^c}  e^{-\jmath \left( \theta \left( g_i \right) - \theta \left( f \right) \right)}  \vert g_i \vert e^{\jmath \theta \left( g_i \right) } \right\vert \nonumber \\
		& =\left\vert  \vert f \vert + \left( \sum_{i \in \cA} \eta \vert g_i \vert + \sum_{i \in \cA^c} \vert g_i \vert  \right)  \right\vert. 
\end{align}

Hence, by choosing $\boldsymbol{\omega}$ as in~\eqref{eq:theo:opt:coeff}, $c(\cA,\boldsymbol{\omega}^\star)$ achieves the upper bound given in~\eqref{eq:num:max} with equality, thereby proving that the optimal solution is indeed given by~\eqref{eq:theo:opt:coeff}. We now prove that the optimal active element placement is given by~\eqref{eq:theo:opt:placement}. For any given placement of active elements, selecting the \ac{hris} coefficients using~\eqref{eq:theo:opt:coeff} ensures that $\vert c(\cA,\boldsymbol{\omega}^\star) \vert$ attains the maximum possible value. Therefore, to choose the optimal placement $\cA^\star$, we maximize $\vert c(\cA,\boldsymbol{\omega}^\star)\vert$ w.r.t. $\cA$. From the definition of the indices $a_i$, it follows that
\begin{equation} \label{eq:elem:place:ub}
\vert	c(\cA,\boldsymbol{\omega}^\star) \vert \leq \vert f \vert + \sum_{i=1}^{L} \eta \vert g_{a_i} \vert + \sum_{i=L+1}^{N} \vert g_{a_i} \vert .
\end{equation} 
 We can now observe that $\vert c(\cA,\boldsymbol{\omega}^\star) \vert$ attains the upper bound with equality for the placement in~\eqref{eq:theo:opt:placement}. 
\end{proof}

We summarize the proposed solver as Algorithm~\ref{alg:main}. The major complexity in obtaining $\cA^\star$ and $\boldsymbol{\omega}^\star$ stems from the sorting operation needed to obtain $\cA^\star$. An $N$-long vector can be efficiently sorted at a complexity of the order of $\cO\left( N \log N \right)$~\cite{cormen2009algorithms},  making the proposed algorithm several orders computationally less complex than the exhaustive search-based approach presented in Sec.~\ref{sec:ex:search}.  In essence, the complexity of proposed method  scales log-linearly as a function of $N$ making the algorithm suitable for large \acp{ris}. Before we end this section, we make the following two remarks.

\begin{myrem} \label{corr:optim:place}
Optimal placement of active elements is necessary to fully leverage a \ac{hris} unless both $\mH_{\rm br}$ and $\vh_{{\rm ru}}$ are \ac{los} channels with $\vert g_{j_1} \vert = \ldots = \vert g_{j_N} \vert$ for which the upper bound in~\eqref{eq:elem:place:ub} is achieved with equality for any $\cA$. In other words, for fading channels, an optimal placement is necessary to maximize the received signal power at the \ac{ue}.
\end{myrem}
 
 \begin{myrem} 
 	If $P_{\rm ris}(\boldsymbol{\omega}^\star,\cA^\star) > P_{\rm ris, max}$, i.e., when the \ac{ris} power constraint is active, we can obtain a solution satisfying the \ac{ris} power constraints by scaling the co-efficients of the active elements as
 	\begin{equation} \label{eq:pow:con}
 		\tilde{\omega}_i = \left( \frac{P_{\rm ris,max}}{ \sum_{i \in \cA} \vert \omega_i^\star \vert^2  \left( \mbE \left[ \vu_{\rm in} \vu_{\rm in}\rH \right] + \nu^2  \right)}  \right)^{1/2} {\omega}_i^\star, \hspace{2.5mm} i \in \cA.
 	\end{equation}	
 \end{myrem}

\begin{algorithm}[!t] 
	\caption{Proposed solver for $\vp$, $\boldsymbol{\omega}$, and $\cA$.}\label{alg:main}
	{\bf Initialization:}  $\boldsymbol{\omega}^{(n)} = [\eta,0,\ldots,0]\rT$, $\cA^{(n)} = \{1,\ldots,L\} $ 
	\begin{algorithmic}[1]
		\For{$n=1, 2, \cdots,\texttt{MaxIter}$}
		\State Set $\vp^\star = \vh/\Vert \vh \Vert$, $\vh\rH = \vh_{{\rm ru}}\rH{\rm diag}\left( \boldsymbol{\omega}^{(n)}\right)\mH_{\rm br} + \vh_{{\rm bu}}\rH$
		\State Set $\cA^{(n)}$ using~\eqref{eq:theo:opt:placement} and $\boldsymbol{\omega}^{(n)}$ using~\eqref{eq:theo:opt:coeff}
		\EndFor	
	\end{algorithmic}
\end{algorithm}

\section{Theoretical analysis} \label{sec:analysis}

In this section, we theoretically show that the performance gap between the solution obtained using the exhaustive search-based scheme and the proposed solution is upper bounded when $\eta$ is appropriately selected. Let $\gamma^{\rm opt}$ be the optimal \ac{snr} obtained using the exhaustive search-based method. Let $\gamma^{\rm prop}$ be the \ac{snr} of the proposed solver. Substituting the obtained solution in~\eqref{eq:def:gamma}, we get 
	\begin{equation}
		\gamma^{\rm prop} = \frac{ c^{\rm max} }{ \sum_{i \in \cA} \nu^2 \eta^2  \vert  \left[ \vh_{{\rm ru}} \right]_{i}  \vert^2 + \sigma^2 },
	\end{equation}
	where $c^{\rm max} = \left \vert \vert f \vert + \sum_{i=1}^{L} \eta \vert g_{a_i} \vert  + \sum_{i=L+1}^{N} \vert g_{a_i} \vert \right \vert^2 $. 

 \begin{theorem} \label{theo:boundedness}
	Let us define $E \overset{\Delta}{=} \frac{ \sigma^2 }{ c^{\rm max}} \vert \gamma^{\rm opt} - \gamma^{\rm prop} \vert$ as the normalized suboptimality gap. Then, $E \leq \delta$ if the amplification factor satisfies 
	\begin{equation} \label{eq:theo3:main}
		\eta \leq \sqrt{  \frac{ \delta \sigma^2 }{ (1-\delta)\nu^2 \sum_{i=1}^{L} \left \vert \left[\vh_{{\rm ru}}\right]_{j_i} \right\vert^2  } } \overset{\Delta}{=} \eta_{\rm max}.
	\end{equation}
\end{theorem}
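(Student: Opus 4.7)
The plan is to sandwich $\gamma^{\mathrm{opt}}$ between an easily-computable upper bound and $\gamma^{\mathrm{prop}}$, and then solve the resulting inequality for $\eta$. Because the proposed method is a feasible choice, we trivially have $\gamma^{\mathrm{opt}} \geq \gamma^{\mathrm{prop}}$, so $|\gamma^{\mathrm{opt}} - \gamma^{\mathrm{prop}}| = \gamma^{\mathrm{opt}} - \gamma^{\mathrm{prop}}$ and no absolute value needs to be tracked.

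First, I would produce a clean upper bound on $\gamma^{\mathrm{opt}}$. The triangle-inequality argument used inside the proof of Theorem~\ref{theo:main} already shows that, for \emph{any} feasible $(\boldsymbol{\omega},\cA)$, the numerator $|c(\cA,\boldsymbol{\omega})|^2 \leq c^{\mathrm{max}}$. Since the denominator of $\gamma$ satisfies $r(\boldsymbol{\omega},\cA) + \sigma^2 \geq \sigma^2$, we get the blanket bound $\gamma^{\mathrm{opt}} \leq c^{\mathrm{max}}/\sigma^2$. This is the only nontrivial upper bound we need.

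Next, I would substitute the closed-form $\gamma^{\mathrm{prop}} = c^{\mathrm{max}}/(R+\sigma^2)$, where $R \triangleq \nu^2 \eta^2 \sum_{i\in\cA^\star} |[\vh_{\mathrm{ru}}]_i|^2$, and compute
\begin{equation*}
E = \frac{\sigma^2}{c^{\mathrm{max}}}(\gamma^{\mathrm{opt}} - \gamma^{\mathrm{prop}}) \leq \frac{\sigma^2}{c^{\mathrm{max}}}\left(\frac{c^{\mathrm{max}}}{\sigma^2} - \frac{c^{\mathrm{max}}}{R+\sigma^2}\right) = \frac{R}{R+\sigma^2}.
\end{equation*}
Imposing $R/(R+\sigma^2) \leq \delta$ and rearranging yields $R \leq \delta\sigma^2/(1-\delta)$. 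Since $\cA^\star$ is the placement selected by Algorithm~\ref{alg:main} (which sorts by $|g_i|$ rather than by $|[\vh_{\mathrm{ru}}]_i|$), $R$ need not coincide with $r_{\mathrm{max}}$; however, by the very definition of the indices $j_i$ in Lemma~1, one always has $R \leq r_{\mathrm{max}} = \nu^2 \eta^2 \sum_{i=1}^L |[\vh_{\mathrm{ru}}]_{j_i}|^2$. Enforcing the stronger sufficient condition $r_{\mathrm{max}} \leq \delta\sigma^2/(1-\delta)$ and solving for $\eta$ gives exactly the bound $\eta_{\mathrm{max}}$ stated in \eqref{eq:theo3:main}.

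The only subtle step is the first one: one must justify that $c^{\mathrm{max}}$ genuinely dominates the numerator achieved by the exhaustive search, including its choice of $\cA$ and of the active-element magnitudes $|\omega_i|\leq\eta$. This is where I would invoke the intermediate bound \eqref{eq:elem:place:ub} from the proof of Theorem~\ref{theo:main}, which is stated for an arbitrary feasible $\cA$, not just for $\cA^\star$. Everything else is straightforward algebra; no optimization-theoretic machinery is required beyond monotonicity of $x\mapsto x/(x+\sigma^2)$.
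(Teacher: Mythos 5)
Your proof is correct and follows essentially the same route as the paper: sandwich $\gamma^{\rm opt}$ below $c^{\rm max}/\sigma^2$ (numerator bound plus dropping the RIS noise) and $\gamma^{\rm prop}$ above $c^{\rm max}/(r_{\rm max}+\sigma^2)$, reduce $E$ to $\epsilon/(1+\epsilon)$ with $\epsilon = r_{\rm max}/\sigma^2$, and invert for $\eta$. Your explicit handling of the distinction between $R$ (sorted by $|g_i|$) and $r_{\rm max}$ (sorted by $|[\vh_{\rm ru}]_i|$) is a small point of added care that the paper glosses over, but it does not change the argument.
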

The proof is given in the appendix. From~\eqref{eq:def:gamma} and~\eqref{eq:gamma:min:def}, we can observe that the difference between $\gamma_{\rm min}$ and $\gamma$ will be small for small values of $r(\boldsymbol{\omega},\cA)$. Since the \ac{ris} noise variance $\nu^2$ and the channel gains are not under our control, choosing a small enough $\eta$ will ensure that $r(\boldsymbol{\omega},\cA)$ (and hence, the performance gap) is small. Theorem~\ref{theo:boundedness} allows us to  choose an appropriate value of $\eta$ so that the performance gap is bounded. Through numerical simulations, we show that for typical values of the system parameters, the values of $\gamma^{\rm prop}$ and $\gamma^{\rm opt}$ are comparable.

 	\begin{figure*}[ht]
  	\begin{subfigure}[c]{0.65\columnwidth}\centering
 		\includegraphics[width=\columnwidth]{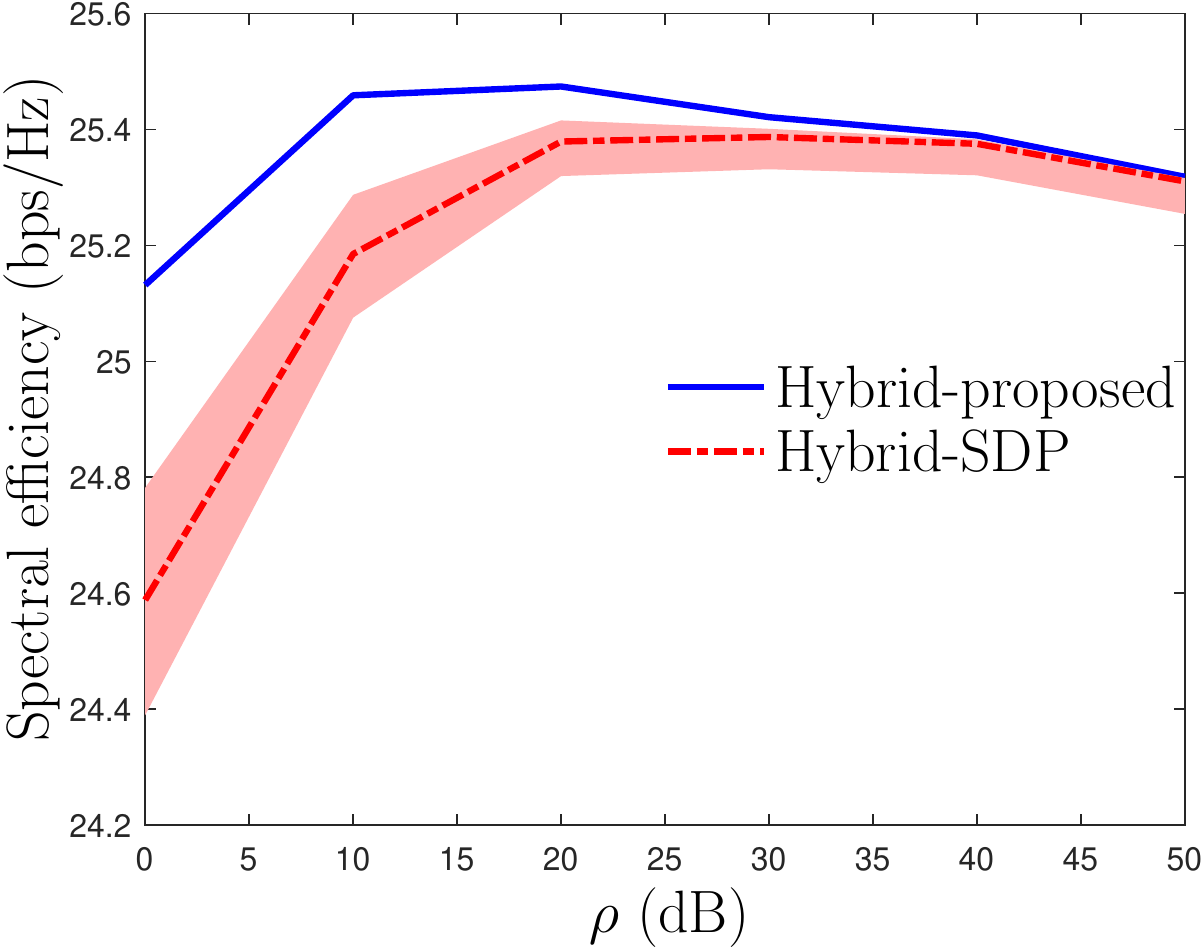}
 		\caption{}
 		
 	\end{subfigure}
 	~
 	\begin{subfigure}[c]{0.65\columnwidth}\centering
 		\includegraphics[width=\columnwidth]{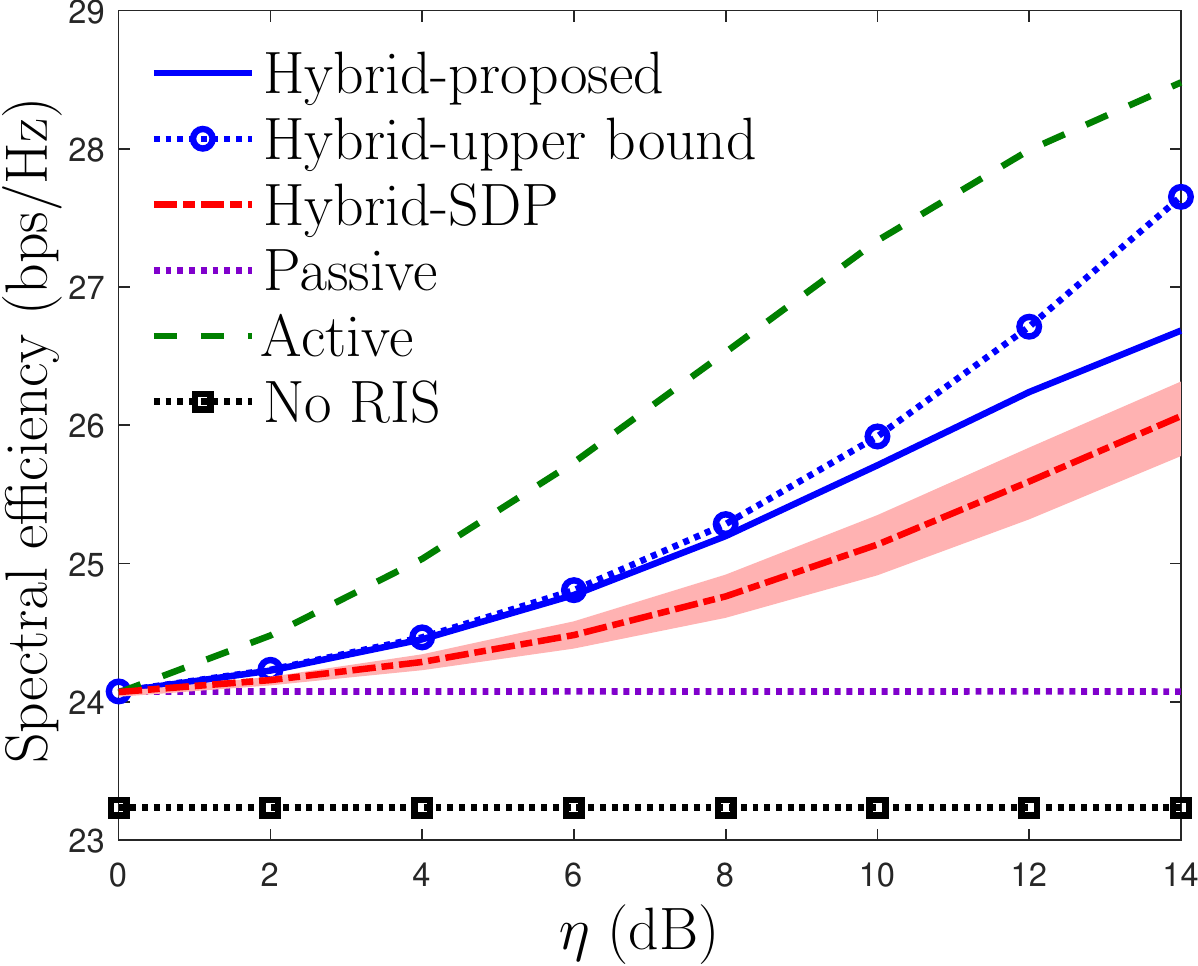}
 		\caption{}
 		
 	\end{subfigure}
 	~
 	\begin{subfigure}[c]{0.65\columnwidth}\centering
 		\includegraphics[width=\columnwidth]{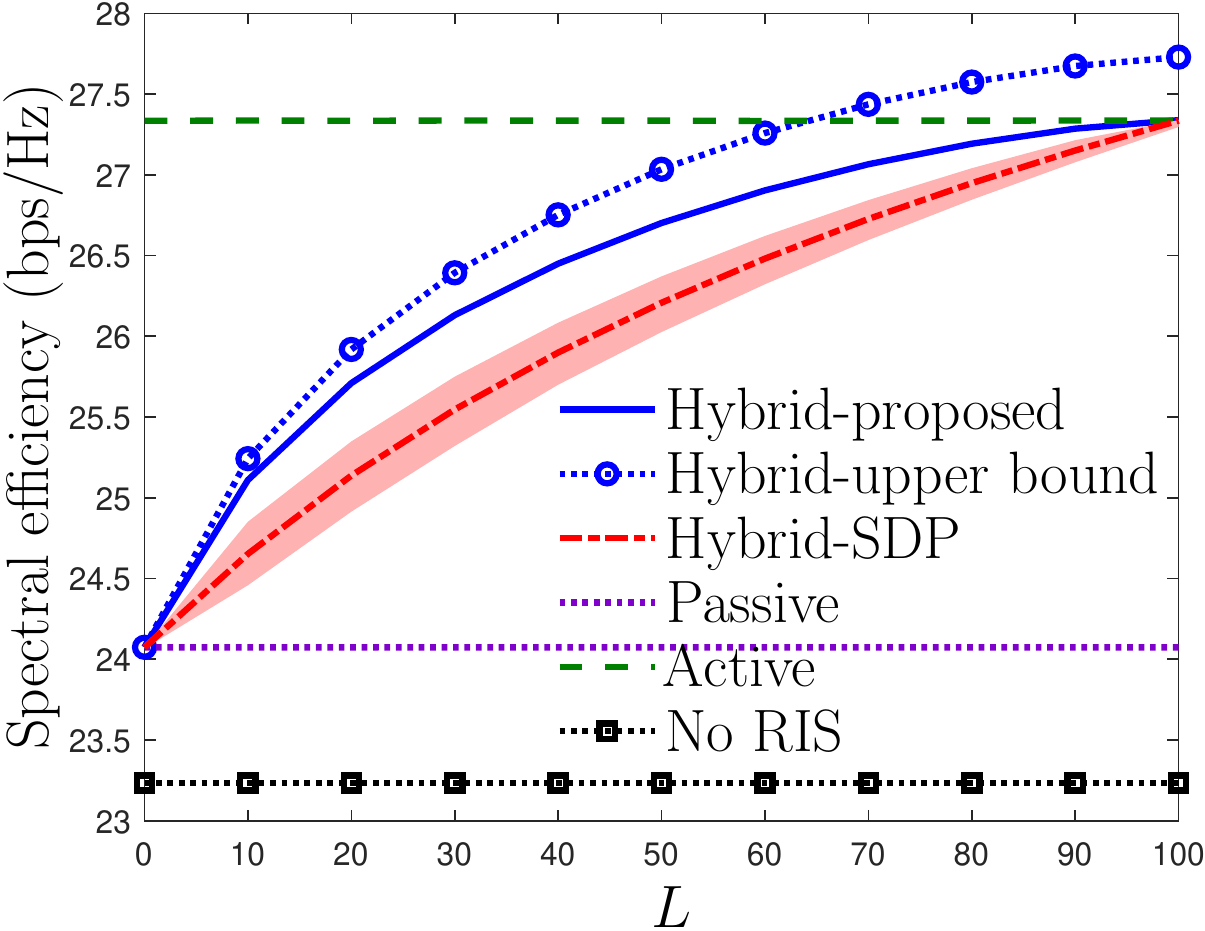}
 		\caption{}
 		
 	\end{subfigure}
 	\caption{\small (a) Impact of $\rho$. (b) Impact of $\eta$. (c) Impact of $L$.}
 	\label{fig:sim}
  \vspace{-4mm}
 \end{figure*}

\section{Simulations} \label{sec:simulations}

In this section, we present numerical simulations demonstrating the advantage of the proposed algorithm. The \ac{bs} and \ac{hris} are assumed to be located at $(0,0,0)$m and $(20,13,3)$m, respectively. The user location is drawn randomly from a $3\times 10$ rectangular area with the bottom left corner at  $(18,8,0)$m. All wireless links are assumed to follow Rician distribution with a Rician factor $\rho$ and a pathloss of $30+22 {\rm log}(d)$ dB, where $d$ is the distance between the terminals in meter.

The \ac{bs} is equipped with $M=8$ antennas. Unless mentioned otherwise, the \ac{ris} comprises $N=100$ elements, out of which $L = 20$ elements are active with a maximum amplification factor of $\eta = 10$~dB. The total transmit power constraint at the \ac{bs} is $P_{\rm t} = 10$~dB and at the \ac{ris} is $P_{\rm ris, max} = 0$ dB. The noise variances are set to $\sigma^2 = -80$~dBm and $\nu^2 = -80$~dBm. We   observe that the \ac{ris} power constraint is inactive for the considered setting.

We compare the performance of the proposed method, referred to as \texttt{Hybrid-proposed}, with that of designing the precoder and \ac{hris} coefficients using \ac{sdp} for arbitrary placement of active elements, referred to as \texttt{Hybrid-SDP}. Specifically, to simulate \texttt{Hybrid-SDP}, we design $\vp$ and $\boldsymbol{\omega}$ using the \ac{sdp}-based  method presented in Sec.~\ref{sec:altopt:baseline} for $100$ arbitrary candidate active element locations (instead of ${100 \choose 20}$). By the dotted-dash line in Figure~\ref{fig:sim}, we indicate the average value of \texttt{Hybrid-SDP} (averaged over the $100$ arbitrary placements). The region between the best and worst placement is  shaded.
We also compare the proposed method with that of using a fully passive \ac{ris} (\texttt{Passive}), a fully active \ac{ris} (\texttt{Active}), and a system without \ac{ris} (\texttt{No RIS}).  All subsequent plots are obtained by averaging over $100$ independent channel realizations.

 We present the performance of \texttt{Hybrid-proposed} and \texttt{Hybrid-SDP} for different Rician factors in Fig.~\ref{fig:sim}(a). For lower values of $\rho$, we can observe that an optimally designed \ac{hris} has a better performance when compared to a scheme with arbitrary placement. For large Rician factors, the performance gap is indeed decreasing due to the \ac{los} nature of the associated channels wherein $\vert [\vg]_{1} \vert \approx \vert [\vg]_{2} \vert \approx \ldots \approx \vert [\vg]_{N} \vert$. For subsequent simulations, we fix $\rho = 10$ for the direct link $\vh_{{\rm bu}}$ and the \ac{bs}-\ac{ris} link $\mH_{\rm br}$. The \ac{ris}-\ac{ue} link  is assumed to be Rayleigh  distributed (i.e., $\rho = 0$).

The impact of the amplification factor $\eta$ and the number of active elements $L$ on the performance of the proposed scheme is presented in Figures~\ref{fig:sim}(b) and~\ref{fig:sim}(c), respectively. The proposed scheme is significantly better than that of \texttt{Passive} and \texttt{No RIS}. Moreover, throughout the considered range, the performance of low-complexity \texttt{Hybrid-proposed} is better than that of \texttt{Hybrid-SDP}. Generally, the performance of methods comprising active elements increases with an increase in $\eta$  or $L$. This is expected since a higher value of $\eta$ and $L$ results in an improved array gain due to the active elements. It is also interesting to note that the proposed scheme often results in almost {\color{black} $1$ bps/Hz higher SE} (for example, at $\eta=10$ and $L = 40$) when compared with the worst arbitrary placement. With a carefully designed placement of active elements, the proposed method also achieves comparable performance (within around {\color{black} 98 $\%$ SE}) to that of a fully active \ac{ris} with just $L=60$ as opposed to a \texttt{Hybrid-SDP}, which requires $L=80$. To summarize, the numerical results clearly demonstrate that a channel-aware optimal placement is crucial to benefit from \ac{hris} architectures fully.

\section{Conclusion} \label{sec:conclusion}

In this paper, we proposed an algorithm having log-linear complexity to design the transmit precoders, \ac{ris} reflection coefficients, and the placement of the active elements to maximize the \ac{ue} \ac{snr}. To obtain a low-complexity algorithm and a tractable solution, we relaxed the problem of maximizing the \ac{snr} with that of maximizing a lower bound on the \ac{snr} and presented simple closed-form expression for the precoder and \ac{hris} coefficients. We have also theoretically showed that the proposed solution is close to the exhaustive search-based solution and derived a condition under which the performance gap is bounded. Through numerical simulations, we have demonstrated that the proposed method is significantly better than using either a fully passive RIS or a Hybrid RIS without optimal active element placement. 

\appendices
	\section{Proof of Theorem~\ref{theo:boundedness}}
	
	\begin{proof}
		We first show that the normalized performance gap $E$ is upper bounded. We then show that $E \leq \delta$ when the amplification factor satisfies~\eqref{eq:theo3:main}. We begin by computing a lower bound for $\gamma^{\rm prop}$ and an upper bound for $\gamma^{\rm opt}$. Let us recall that the indices $\{j_1,\ldots,j_N\}$ are defined so that $\vert \left[ \vh_{{\rm ru}} \right]_{j_1} \vert \geq \ldots \geq \vert \left[ \vh_{{\rm ru}} \right]_{j_N} \vert $. It follows that $\sum_{i\in \cA} \left \vert  \left[ \vh_{{\rm ru}} \right]_i  \right\vert^2  \leq \sum_{i=1}^{L} \left \vert \left[ \vh_{{\rm ru}} \right]_{j_i}  \right\vert^2  $. Hence, we have {\color{black}$	\gamma^{\rm lb} \overset{\Delta}{=}  \frac{ c^{\rm max} }{ \sigma^2(1+\epsilon)} \leq \gamma^{\rm prop}$},
		where $\epsilon = \nu^2 \eta^2 \sum_{i=1}^{L} \left \vert \left[ \vh_{{\rm ru}} \right]_{j_i}  \right\vert^2/ \sigma^2 $. Moreover, since the RIS noise term is non-negative, we can conclude that the maximum achievable SNR will be the largest when $\nu^2 = 0$. Hence, we have
	{\color{black}$\gamma^{\rm ub} \overset{\Delta}{=}  \sigma^{-2}c^{\rm max}  \geq \gamma^{\rm opt}$}		and  $\gamma^{\rm lb} \leq \gamma^{\rm prop} \leq \gamma^{\rm opt} \leq \gamma^{\rm ub}$. Now, let us compute the difference between the upper and lower bounds of the received SNR. We have
		\begin{equation} \label{eq:theo:analysis:proof1}
		E = \frac{\sigma^2}{c^{\rm max}}	\vert \gamma^{\rm ub} - \gamma^{\rm lb} \vert =    1  -  \frac{1}{1+\epsilon}   =  \frac{\epsilon}{1+\epsilon}.
		\end{equation}
		Since $ \vert \gamma^{\rm opt} - \gamma^{\rm prop} \vert  \leq  \vert \gamma^{\rm ub} - \gamma^{\rm lb} \vert  $, it follows that $E \leq \frac{\epsilon}{1+\epsilon}.$
		
		Let us recall that $\epsilon$ depends on $\eta$. Since $E$ is upper bounded by $\frac{\epsilon}{1+\epsilon}$, it is sufficient to choose $\eta$ so that $\frac{\epsilon}{1+\epsilon} \leq \delta$, which implies $\epsilon \leq \frac{\delta}{1-\delta}$. Using the definition of $\epsilon$, we finally have  {\color{black} $\epsilon = 	\sigma^{-2} \nu^2 \eta^2 \sum_{i=1}^{L}  \left\vert   \left[  \vh_{{\rm ru}} \right]_{j_i}  \right\vert^2    \leq \frac{\delta}{1-\delta} $.}
		On re-arranging and taking the square root, we arrive at~\eqref{eq:theo3:main}.
	\end{proof}

	\bibliographystyle{IEEEtran}
	\bibliography{IEEEabrv,bibliography}

\end{document}